\def\op#1{\mathop{{\it\fam0} #1}\limits}
\newcommand{\beq}{\begin{equation}}
\newcommand{\eeq}{\end{equation}}
\newcommand{\ben}{\begin{eqnarray}}
\newcommand{\een}{\end{eqnarray}}
\newcommand{\be}{\begin{eqnarray*}}
\newcommand{\ee}{\end{eqnarray*}}
\newcommand{\bea}{\begin{eqalph}}
\newcommand{\eea}{\end{eqalph}}
\newcommand{\cA}{{\mathcal A}}
\newcommand{\gA}{{\mathfrak A}}
\newcommand{\cR}{{\mathcal R}}
\newcommand{\si}{\sigma}
\newcommand{\w}{\wedge}
\newenvironment{eqalph}{\stepcounter{equation}
\setcounter{equationa}{\value{equation}} \setcounter{equation}{0}

\begin{eqnarray}}{\end{eqnarray}\setcounter{equation}{\value{equationa}}}
\newcounter{equationa}
\newcounter{remark}
\newcounter{example}
\newcounter{theorem}
\newcounter{proposition}
\newcounter{lemma}
\newcounter{corollary}
\newcounter{definition}
\def\theremark{\arabic{remark}}
\def\thedefinition{\arabic{theorem}}
\newenvironment{proof}{{\it Proof.}}{
\medskip }
\newenvironment{theorem}{\refstepcounter{theorem} \medskip{\bf
Theorem \thedefinition.}\it}{\medskip }
\newcommand{\mar}[1]{}
\begin{document}

\hbox{}

\begin{center}

{\Large\bf

Remark on the Serre -- Swan theorem for graded manifolds}

\bigskip

G. SARDANASHVILY

\medskip

Department of Theoretical Physics, Moscow State University, 117234
Moscow, Russia

\bigskip

\end{center}

\begin{abstract}
Combining the Batchelor theorem and the Serre -- Swan theorem, we
come to that, given a smooth manifold $X$, a graded commutative
$C^\infty(X)$-algebra $\cA$ is isomorphic to the structure ring of
a graded manifold with a body $X$ iff it is the exterior algebra
of some projective $C^\infty(X)$-module of finite rank. In
particular, it follows that odd fields in field theory on a smooth
manifold $X$ can be represented by graded functions on some graded
manifold with body $X$.
\end{abstract}

\bigskip
\bigskip
\bigskip

In classical field theory, there are different descriptions of odd
fields on graded manifolds \cite{cari03,book09,mont06,ijgmmp13}
and supermanifolds \cite{cia95,franc}. Both graded manifolds and
supermanifolds are phrased in terms of sheaves of graded
commutative algebras \cite{bart,book09,sard09}. However, graded
manifolds are characterized by sheaves on smooth manifolds, while
supermanifolds are constructed by gluing sheaves on supervector
spaces. Treating odd fields on a smooth manifold $X$, one can
follow forthcoming Theorem \ref{vv0}. It states that, if a graded
commutative algebra is an exterior algebra of some projective
$C^\infty(X)$-module of finite rank, it is isomorphic to the
algebra of graded functions on a graded manifold whose body is
$X$. By virtue of this theorem, odd fields can be represented by
generating elements of the structure ring of a graded manifold
whose body is $X$ \cite{cmp04,book09,ijgmmp13}.

Let $X$ be a smooth manifold which is assumed to be real,
finite-dimensional, Hausdorff, second-countable, and connected.
The well-known Serre -- Swan theorem establishes the following.

\begin{theorem} \label{ss} \mar{ss}
A $C^\infty(X)$-module $P$ is isomorphic to the structure module
of sections of a smooth vector bundle over $X$ iff it is a
projective module of finite rank.
\end{theorem}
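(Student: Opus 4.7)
The plan is to prove the two implications separately, using in both directions the fact that on a paracompact manifold every vector bundle admits a complement making the total sum trivial.

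First I would dispose of the easy direction. Given a smooth vector bundle $\pi:E\to X$ of rank $r$, I invoke the standard fact that, because $X$ is Hausdorff, second-countable and finite-dimensional, there exist an integer $N$ and a vector bundle $E'\to X$ such that $E\oplus E'$ is isomorphic to the trivial bundle $X\xx\bR^N$. This is proved by covering $X$ by finitely many trivialising opens (using finite covering dimension) and gluing local frames by a subordinate partition of unity. Taking sections is additive and turns the trivial bundle into the free module $C^\infty(X)^N$, so $\Gamma(E)\opl\Gamma(E')\cong C^\infty(X)^N$. Thus $\Gamma(E)$ is a direct summand of a free $C^\infty(X)$-module of finite rank, i.e.\ projective of finite rank.

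For the converse, suppose $P$ is projective of finite rank, so that there is a $C^\infty(X)$-module $Q$ and an integer $N$ with $P\opl Q\cong C^\infty(X)^N$. The projector onto the first summand is an idempotent endomorphism $e$ of $C^\infty(X)^N$, which I write as an $N\xx N$ matrix $(e_{ij})$ with $e_{ij}\in C^\infty(X)$ and $e^2=e$. For each point $x\in X$ the evaluation $e(x)\in\mathrm{Mat}_N(\bR)$ is an idempotent linear map on $\bR^N$, and I define
\beq
E \;=\; \bigsqcup_{x\in X} e(x)(\bR^N)\subset X\xx\bR^N.
\eeq
The rank function $x\mto\mathrm{rank}\,e(x)=\mathrm{tr}\,e(x)$ is an integer-valued continuous function on the connected manifold $X$, hence constant, say equal to $r$. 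This constant rank together with the smoothness of $e$ allows one to produce, in a neighbourhood of any $x_0$, a smooth frame for $E$ by selecting $r$ columns of $e$ whose values at $x_0$ are linearly independent and then applying a smooth invertible change of basis; this exhibits $E\to X$ as a smooth vector subbundle of the trivial bundle $X\xx\bR^N$.

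Finally I would identify $\Gamma(E)$ with $P$. The inclusion $E\har X\xx\bR^N$ induces an injection $\Gamma(E)\har C^\infty(X)^N$ whose image consists precisely of those $N$-tuples $s\in C^\infty(X)^N$ satisfying $e\,s=s$, i.e.\ the image of the idempotent $e$ acting on $C^\infty(X)^N$; by construction this image is $P$. Hence $P\cong\Gamma(E)$. The main obstacle in this argument is the step in the previous paragraph: verifying that the fibrewise image of a smooth idempotent of constant rank is genuinely a locally trivial smooth subbundle. Once local triviality is in hand, the identification of sections with elements of $P$ is formal, and the two implications together give the theorem.
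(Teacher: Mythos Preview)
The paper does not actually prove this theorem: it is quoted as a known result, with the remark immediately following the statement that it was ``originally proved for smooth bundles over a compact base $X$'' and ``has been extended to an arbitrary $X$'', together with citations. So there is no in-paper proof to compare against.

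Your argument is the standard one and is correct. For the forward direction, the key non-compact ingredient is exactly what you invoke: finite covering dimension of $X$ lets you refine any trivialising cover to one indexed by finitely many (possibly disconnected) opens, from which a partition of unity embeds $E$ fibrewise injectively into a trivial bundle of finite rank; the complementary summand then exhibits $\Gamma(E)$ as a direct summand of a finite free module. For the converse, writing $P$ as the image of a smooth idempotent $N\times N$ matrix $e$ and using connectedness of $X$ to force $\mathrm{tr}\,e$ constant is the right move. The step you flag as the ``main obstacle'' --- local triviality of the fibrewise image of $e$ --- is genuinely the only place requiring care, but it is routine: choose $r$ columns of $e$ (equivalently, $r$ vectors $e(x_0)v_{i_1},\dots,e(x_0)v_{i_r}$) that are independent at $x_0$; by continuity they stay independent on a neighbourhood and give a smooth local frame for $E$. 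The final identification $\Gamma(E)=\mathrm{Im}\,e=P$ is then formal, as you say. Nothing is missing.
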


Originally proved for smooth bundles over a compact base $X$, this
theorem has been extended to an arbitrary $X$
\cite{book09,ren,sard01}

Turn now to graded manifolds \cite{bart,book09,sard09}. A graded
manifold of dimension $(n,m)$ is defined as a local-ringed space
$(X,\gA)$ where $X$ is an $n$-dimensional smooth manifold $X$ and
$\gA=\gA_0\oplus\gA_1$ is a sheaf of graded commutative algebras
of rank $m$ such that:

$\bullet$ there is the exact sequence of sheaves
\be
0\to \cR \to\gA \op\to^\si C^\infty_ZX\to 0, \qquad
\cR=\gA_1+(\gA_1)^2,
\ee
where $C^\infty_X$ is the sheaf of smooth real functions on $X$;

$\bullet$ $\cR/\cR^2$ is a locally free sheaf of
$C^\infty_X$-modules of finite rank (with respect to pointwise
operations), and the sheaf $\gA$ is locally isomorphic to the
exterior product $\w_{C^\infty_X}(\cR/\cR^2)$.

A sheaf $\gA$ is called the structure sheaf of a graded manifold
$(X,\gA)$, and a manifold $X$ is said to be the  body of
$(X,\gA)$. Sections of the sheaf $\gA$ are called graded functions
on a graded manifold $(X,\gA)$. They make up a graded commutative
$C^\infty(X)$-ring $\gA(X)$ called the structure ring of
$(X,\gA)$.

The above mentioned Batchelor theorem states the following
\cite{bart,book09}.

\begin{theorem} \label{lmp1a} \mar{lmp1a}
Let $(X,\gA)$ be a graded manifold. There exists a vector bundle
$E\to X$ with an $m$-dimensional typical fibre $V$ such that the
structure sheaf $\gA$ of $(X,\gA)$ is isomorphic to the structure
sheaf $\gA_E=S_{\w E^*}$ of germs of sections of the exterior
product
\mar{ss12f11}\beq
\w E^*=(X\times\mathbb R) \op\oplus_X E^*\op\oplus_X \w^2
E^*\op\oplus_X\cdots\op\oplus_X\w^m E^* \label{ss12f11}
\eeq
of the dual $E^*$ of $E$. Its typical fibre is the Grassmann
algebra
\be
\w V^*=\mathbb R \oplus V\oplus \op\w^2 V\oplus\cdots\oplus\op\w^m
V.
\ee
\end{theorem}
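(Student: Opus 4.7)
The plan is to extract the vector bundle $E$ from the intrinsic data of $(X,\gA)$ and then promote the local isomorphism built into the definition of a graded manifold to a global one. First I would observe that, by hypothesis, $\cR/\cR^2$ is a locally free sheaf of $C^\infty_X$-modules of finite rank $m$. Its module $P$ of global sections is a finitely generated projective $C^\infty(X)$-module, so by Theorem \ref{ss} there is a smooth rank-$m$ vector bundle $F\to X$ with $P\cong\Gamma(F)$. Setting $E=F^*$, the sheaf $\gA_E=S_{\w E^*}$ then has exactly the same local structure, short exact sequence, and quotient sheaf $\cR_E/\cR_E^2\cong \cR/\cR^2$ as $\gA$. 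Thus the problem reduces to producing a sheaf isomorphism $\gA\to\gA_E$ compatible with the surjections onto $C^\infty_X$ and inducing the identity on $\cR/\cR^2$.

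Next I would choose a cover $\{U_i\}$ of $X$ that simultaneously trivializes $E$ and realizes the local isomorphisms $\f_i\colon \gA|_{U_i}\to \gA_E|_{U_i}$ guaranteed by the definition of a graded manifold; on $U_i$ both sheaves are identified with $C^\infty_X(U_i)\otimes \w V^*$ for the typical fibre $V$. On the overlaps $U_i\cap U_j$ the composites $\f_{ij}=\f_i\circ\f_j^{-1}$ are automorphisms of $\gA_E|_{U_i\cap U_j}$ that cover the identity on the body $X$ and induce the transition cocycle of $E$ on $\cR_E/\cR_E^2$. Subtracting off the piece coming from $E$ itself (which is already globally defined on $\gA_E$), one is left with a cocycle $\{\psi_{ij}\}$ of sheaf automorphisms of $\gA_E$ that reduce to the identity modulo the ideal $\cR_E^2$, i.e.\ they lie in the subsheaf of unipotent graded algebra automorphisms.

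The decisive step, and the one I expect to be the main obstacle, is to trivialize this cocycle $\{\psi_{ij}\}$. Since $X$ is Hausdorff, second-countable and smooth, one can choose $\{U_i\}$ locally finite and a smooth partition of unity $\{\rho_i\}$ subordinate to it. The filtration of $\gA_E$ by powers of $\cR_E$ makes the relevant automorphism sheaf a sheaf of unipotent groups on which logarithms converge in finitely many steps (because the fibre $\w V^*$ is nilpotent of order $m+1$). Writing $\psi_{ij}=\exp(D_{ij})$ with $D_{ij}$ a graded derivation raising filtration degree and setting $D_i=\sum_k\rho_k D_{ik}$ as in the standard \v Cech argument on a fine sheaf, one verifies $D_i-D_j=D_{ij}$ on overlaps, hence $\exp(D_i)\circ\f_i=\exp(D_j)\circ\f_j$ glues into a global sheaf isomorphism $\gA\cong\gA_E$.

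Finally I would record the consequence for typical fibres: the stalk of $\gA_E$ at every point is isomorphic to the Grassmann algebra $\w V^*=\bR\oplus V\oplus\w^2 V\oplus\cdots\oplus\w^m V$, as written in the statement, which completes the proof. The technical heart of the argument is thus the partition-of-unity gluing, the rest being a routine translation between local ringed spaces, projective $C^\infty(X)$-modules, and vector bundles via Theorem \ref{ss}.
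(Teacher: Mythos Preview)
The paper does not actually prove this theorem; it is stated as the classical Batchelor theorem with citations to \cite{bart,book09}, and is then used as a black box in the proof of Theorem~\ref{vv0}. So there is no proof in the paper to compare your argument against.

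That said, your outline is the standard strategy for Batchelor's theorem, and the identification of $E$ via $\cR/\cR^2$ together with the reduction to a unipotent gluing problem is exactly right. One genuine technical point deserves care, however. In the decisive step you write $\psi_{ij}=\exp(D_{ij})$, set $D_i=\sum_k\rho_kD_{ik}$, and conclude $D_i-D_j=D_{ij}$. This tacitly assumes that the $D_{ij}$ satisfy an \emph{additive} \v Cech cocycle condition. But the $\psi_{ij}$ only satisfy the \emph{multiplicative} condition $\psi_{ij}\circ\psi_{jk}=\psi_{ik}$, and since the derivations $D_{ij}$ need not commute, the logarithm is not a group homomorphism (Baker--Campbell--Hausdorff produces correction terms). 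The usual remedy is to argue inductively along the filtration by powers of $\cR_E$: at each stage the residual cocycle lives in an \emph{abelian} fine sheaf (a graded piece of the sheaf of derivations raising filtration degree), so the partition-of-unity averaging applies there; after finitely many steps, by nilpotence of $\w V^*$, the full cocycle is trivialized. With that refinement your proof goes through.
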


In particular, it follows that the structure ring $\gA(X)$ of a
graded manifold $(X,\gA)$ is isomorphic to the ring of sections of
the exterior product (\ref{ss12f11}).

It should be emphasized that Batchelor's isomorphism in Theorem
\ref{lmp1a} fails to be canonical, but in field models it usually
is fixed from the beginning. We agree to call a graded manifold
$(Z,\gA_E)$ whose structure sheaf is the sheaf of germs of
sections of some exterior bundle $\w E^*$ the simple graded
manifold modelled over a vector bundle $E\to ZX$,

Combining Batchelor Theorem \ref{lmp1a} and Serre -- Swan Theorem
\ref{ss}, we come to the following Serre -- Swan theorem for
graded manifolds \cite{jmp05a,book09,ijgmmp13}.

\begin{theorem} \label{vv0} \mar{vv0}
Let $X$ be a smooth manifold. A graded commutative
$C^\infty(X)$-algebra $\cA$ is isomorphic to the structure ring of
a graded manifold with a body $X$ iff it is the exterior algebra
of some projective $C^\infty(X)$-module of finite rank.
\end{theorem}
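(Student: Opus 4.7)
The plan is to prove the two implications separately, with each one feeding into one of the cited theorems. The whole argument reduces to translating between (a) projective modules of finite rank over $C^\infty(X)$, (b) sections of vector bundles over $X$, and (c) structure rings of simple graded manifolds.

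For the direction ``structure ring $\Rightarrow$ exterior algebra of a projective module,'' I would start from a graded manifold $(X,\gA)$ with structure ring $\gA(X)$. Batchelor's theorem (Theorem \ref{lmp1a}) furnishes a vector bundle $E\to X$ with $\gA\cong\gA_E=S_{\w E^*}$, so passing to global sections gives $\gA(X)\cong \Gamma(\w E^*)$. Setting $P=\Gamma(E^*)$, Serre--Swan (Theorem \ref{ss}) tells me that $P$ is a projective $C^\infty(X)$-module of finite rank. It then remains to identify $\Gamma(\w E^*)$ with the exterior algebra $\w_{C^\infty(X)} P$. This identification is the small technical step I expect to write out carefully: there is a natural $C^\infty(X)$-algebra homomorphism $\w_{C^\infty(X)}\Gamma(E^*)\to \Gamma(\w E^*)$, and it is an isomorphism because $E$ is locally trivial of finite rank and taking sections of a finite-rank locally trivial bundle commutes with forming finite exterior powers over $C^\infty$.

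For the converse, assume $\cA\cong\w_{C^\infty(X)}P$ with $P$ projective of finite rank. Serre--Swan produces a smooth vector bundle $F\to X$ with $P\cong\Gamma(F)$; letting $E=F^*$, so that $E^*\cong F$, the same identification as above yields $\cA\cong\w_{C^\infty(X)}\Gamma(E^*)\cong \Gamma(\w E^*)$. The right-hand side is, by construction, the structure ring $\gA_E(X)$ of the simple graded manifold $(X,\gA_E)$ modelled on $E$, whose body is $X$. This produces the graded manifold realizing $\cA$.

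The genuine obstacle is the isomorphism $\w_{C^\infty(X)}\Gamma(E^*)\cong\Gamma(\w E^*)$ that appears in both directions; the rest is a bookkeeping of what Batchelor and Serre--Swan already give. I would handle it by first observing that it is obviously true locally over a trivializing chart of $E$, where $\Gamma(E^*)$ reduces to a free module and $\w E^*$ reduces to the trivial exterior bundle, and then globalizing using a partition of unity on $X$ (this is why the hypotheses paracompactness/second-countability on $X$, already assumed, are needed, exactly as in the standard proof of Serre--Swan for non-compact $X$). Once this compatibility of $\Gamma$ with $\w$ is in hand, the theorem follows by concatenating the two cited theorems as above.
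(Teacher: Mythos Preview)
Your proposal is correct and follows essentially the same route as the paper: invoke Batchelor to replace $(X,\gA)$ by a simple graded manifold $(X,\gA_E)$, then use Serre--Swan to translate between the module $\Gamma(E^*)$ and a projective $C^\infty(X)$-module of finite rank, and run the same steps in reverse for the converse. The only difference is that the paper leaves the identification $\w_{C^\infty(X)}\Gamma(E^*)\cong\Gamma(\w E^*)$ (and the $E$ versus $E^*$ bookkeeping) implicit, while you make it explicit and indicate how to prove it; otherwise the two arguments coincide.
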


\begin{proof}  By virtue of Batchelor Theorem \ref{lmp1a},
any graded manifold is isomorphic to a graded manifold $(X,\gA_E)$
modelled over some vector bundle $E\to X$. Its structure ring
$\cA_E$ of graded functions consists of sections of the exterior
bundle $\w E^*$ (\ref{ss12f11}) and, thus, it is generated by a
$C^\infty(X)$-module $E^*(X)$ of sections of $E^*\to X$. By virtue
of Serre -- Swan Theorem \ref{ss}, this module is a projective
module of finite rank. Conversely, let a graded commutative
$C^\infty(X)$-algebra $\cA$ be generated by some projective
$C^\infty(X)$-module of finite rank. In accordance with the Serre
-- Swan Theorem \ref{ss}, this module is isomorphic to a module of
sections of some vector bundle $E\to X$ and, thus, $\cA$ is
isomorphic to the structure ring of a simple graded manifold
modelled over $E$.
\end{proof}

As a consequence, a graded commutative algebra in Theorem
\ref{vv0} possesses a number of particular properties. For
instance, the Chevalley -- Eilenberg differential calculus of such
an algebra is minimal, and the cohomology of its de Rham complex
equals the de Rham cohomology of a manifold $X$ \cite{book09}.

As physical outcome, let us mention that higher-stage Noether
identities and gauge symmetries of a reducible degenerate
Lagrangian system on a fiber bundle over $X$ are parameterized by
odd fields, called the antifields and ghosts, respectively
\cite{jmp05a,book09}.

\end{document}